\newtheorem{theorem}{Theorem}
\newtheorem{corollary}{Corollary}[theorem]
\newcommand{\myeq}[1]{\hfill{\refstepcounter{equation}(\theequation)\label{#1}}}
\journal{Journal of \LaTeX\ Templates}
\begin{document}

\begin{frontmatter}

\title{Graph based Data Dependence Identifier for Parallelization of Programs \tnoteref{mytitlenote}}

\author{Kavya Alluru}
\address{School of Computer Science and Engineering, Vellore Institute of Technology, Chennai}
\author{Jeganathan.L}
\address{Center for Advanced Data Science, Vellore Institute of Technology, Chennai}




\begin{abstract}
Automatic parallelization improves the performance of serial program by automatically converting to parallel program. Automatic parallelization typically works in three phases: check for data dependencies in the input program, perform transformations, and generate the parallel code for target machine. Though automatic parallelization is beneficial, it is not done as a part of compiling process because of the time complexity of the data dependence tests and transformation techniques. Data dependencies arise because of data access from memory required for the execution of instructions of the program. In a program, memory is allocated for variables like scalars, arrays and pointers. As of now, different techniques are used to identify data dependencies in scalars, arrays and pointers in a program. In this paper, we propose a graph based Data Dependence Identifier (DDI), which is capable of identifying all types of data dependencies that arise in all types of variables, in polynomial time. In our proposed DDI model, for identifying data dependence in a program, we represent a program as graph. Though many graphical representation of program exist, our approach of representing a program as graph takes a different approach. Also using our DDI model, one can perform basic transformations like dead code elimination, constant propagation, and induction variable detection. 
\end{abstract}

\begin{keyword}
 Automatic Parallelization \sep Parallelizing Compilers \sep Data Dependence
\end{keyword}

\end{frontmatter}


\section{Introduction}
\par Multicore processor, a chip with two or more processors, has completely replaced single core processor in personal computers. This development has put forth a challenge in the effective utilization of the processing power of multicore systems. Advancement in hardware technology always throws a challenge to the software community in the better utilization of the former. 
\par With multiple processes running on a personal computer, each process is made to run on each core which enhances the throughput of the system. But still, a single program can not utilize the processing power of multicore systems until unless it is converted to a parallel program. Converting a serial program to parallel involves breaking down the instructions of a serial program to multiple groups such that these groups can be executed in parallel.
\par Serial to parallel program conversion can be accomplished in two ways: manual or automatic. Manual conversion demands the programmer to have an insight in parallel architectures and parallel programming models, a quiet difficult task for programmer. Automatic conversion, known as automatic parallelization, is a much desired choice.  
\par A compiler which performs automatic parallelization is typically called as Parallelizing Compiler. A Parallelizing Compiler generally works in three phases: (1) Performs data dependence analysis on the source program. (2) Do transformations to remove dependencies in order to identify potential parallelism. (3) Generate parallel code appropriate to the target machine. Data dependence analysis plays a crucial role in deciding whether a program can be parallelized or not. Tests to identify data dependence in a program are discussed in Section 2.
\par Though multicore computers have become a household entity, automatic parallelization has not achieved such pinnacle. Automatic parallelization is not a part of traditional compilers. One reason is due to the high time-complexity of data dependence tests. Two statements in a program are said to be data dependent, if both the statements access same memory location i.e., a value computed in a statement is used by other statement and vice versa. Typically, in a program memory is allocated only using scalar variables, arrays, and pointers. In practice, there are many tests to identify different types of data dependencies(such as scalars, arrays and pointers) based on different approaches. As such there is no unique test to identify all kinds of data dependencies in a program. In this paper, we propose a unique model called Graph based Data Dependence Identifier (DDI) that performs data dependence testing of scalars, arrays and pointers. 

\section{Related Works}
  
Two instructions in a program are said to be \textbf{\textit{data dependent}} if both the instructions access a common memory location one atleast for write operation. In general, programs use scalar variables, arrays, and pointers to allocate memory. Data dependence analysis has been studied in all the three ways of memory allocation independently and extensively.
\subsection{Scalar Dependencies} A compiler other than translating a high level language program to machine understandable language, was further extended to optimize the program. In order to optimize a program, program has been represented as directed graph, to understand the data relationships between the statements in a given program. This concept was introduced by Allen \cite{allen}. To represent a program as graph, program statements are considered as nodes and control flow between these statements as edges. Many such graphs to analyze the data flow in a program were introduced by Kennedy \cite{kennedy}, Ullman \cite{ullman}.

\par This work was further extended to identify and reduce data dependencies in a program. By eliminating data dependencies, program statements can be executed in parallel. The initial work on data dependence elimination was restricted to scalar variables in the program. Kuck \cite{kuck} and Ferrante \cite{ferrante} introduced the concept of data dependence reduction by representing a program as dependence graph.  
\par To represent a program as graph Kuck \cite{kuck} considered, program components (assignment statements, for loop header, while loop header) as nodes. If data computed in a program component $c1$ is accessed by the other component $c2$, then an edge is added between the nodes $c1$ and $c2$, which shows existence of data dependence between $c1$ and $c2$. Five such dependencies are possible: loop, flow, output, input, and anti dependencies. The main objective of Kuck’s work is to reduce the dependencies between the scalar variables in order to identify potential parallelism there by reducing program's execution time.   
\par Ferrante \cite{ferrante} extended Kuck's model by representing a program as pair of graphs (data flow graph and control flow graph) together called as Program Dependence Graph(PDG). Here program statements are considered as nodes, data flow between the program statements are considered as edges in Data flow graph, control flow between the program statements are considered as edges in Control flow graph. Many program transformation techniques like code motion, loop fusion are performed using PDG. 
\par Though Kuck and Ferrante models help to reduce data dependencies in scalars, dependence that exist between the iterations of the loop (loop carried dependence or array dependences) and dependence due to pointers are not dealt here. Data flow analysis techniques used to identify and reduce data dependencies in scalars are discussed in \cite{aho}.
\subsection{Array Dependencies}The problem of identifying data dependence in arrays is formulated using linear equations and inequalities derived from loop subscripts and loop bounds. Many such tests have been proposed, the first one being the GCD Test\cite{uptal}. GCD Test considers the linear equations derived from loop subscripts as linear Diophantine equations and solves them using extended Euclid's algorithm. If an integer solution exists then the test proves the existence of data dependence. Banerjee's test is an extension of GCD Test that adds loop bounds as linear inequalities and solves the linear equations using Intermediate value theorem to check if any real solutions exists. Both GCD test and Banerjee test are approximate methods as they converge to 'dependence exists' in case if the test is inconclusive. I-Test is an extension of Banerjee's tests. It is based on the observation that real solutions predicted by banerjee's test are integer solutions\cite{kong}. All these tests, in case of multidimensional arrays, solves each subscript expression independently to check whether an integer solution exists or not. Omega test\cite{omega} uses Fourier Motzkin Variable Elimination(FMVE) method to solve the system of linear equations and inequalities for an integer solution.This test is more accurate but has the worst case exponential time complexity. Range test\cite{range} checks if a data dependence exists if the loop subscripts are non-linear. NLVI Test \cite{nvli} solves the non-linear and symbolic expressions derived from loop subscripts. This test experimentally proves that it can handle complex loop bounds. QP Test\cite{qptest} proves that the non-linear loop subscripts that are in quadratic form can be solved using quadratic programming. Many other dependence tests exist in the literature\cite{exact}, \cite{practical}, \cite{power}.
\subsection{Pointer Dependencies} Solving data dependence in pointers is considered as separate research area. Alias analysis\cite{alias} and shape analysis\cite{shape} are two popular techniques used to parallelize loops with pointers. \cite{skeleton} transforms the code with pointers to a simpler form.
\subsection{Automatic Parallelization Tools}
SUIF\cite{suif} and Polaris\cite{polaris} were the earliest tools designed for Symmetric Multiprocessor Systems, convert serial FORTRAN program to parallel form. Cetus\cite{cetus} a successor of Polaris, coverts C programs to parallel and is designed for multicore systems. Intel compiler \cite{intel} automatically identifies the loops that can be parallelized and partitions the data accordingly. Any automatic parallelization tool has to perform data dependence testing to convert serial program to parallel program. All the tools mentioned here, make use of data flow analysis for identifying dependencies in scalars, linear equation formulation for array dependencies, alias analysis techniques for pointer dependence. As of now, all the existing tools, make use of different techniques to identify data dependence in scalars, arrays and pointers. As such there is no single model with which one can identify all types of data dependencies.
\par  The main objective of the paper is to design a graph based model called Data Dependence Identifier (DDI) with which one can identify any type of data dependence in polynomial time. In contrast to Kuck and Ferrante model of representing a program as graph where \lq{program statements}\rq  are considered as nodes, our model takes a different approach, we consider variables in the program as nodes and the edges between these variables are drawn based on the mode of accessing the variables from memory. With this novel approach of representing a program as graph, we could identify all types of data dependencies in polynomial time. In addition our model can also perform compiler optimizations such as dead code elimination, constant propagation, and induction variable detection. 
\section{Parameterization of program}
\par As mentioned earlier, our model Data Dependence Identifier(DDI) can identify all types of data dependencies that exist in a program. Since the core concept behind our DDI model, is the graphical representation of a program with different approach. For that purpose we view a program as a structure with finite number of components called as parameters of the program. In this section, we discuss in detail about the parameterization of the program.
\subsection{Program instructions}
\par A program is a sequence of step-by-step instructions when executed gives the desired output. It is a combination of instructions and data. Data is assigned with the help of variables initialized in the program. An instruction in a program, irrespective of the programming language, will fall into any one of the following types:\\
\textbf{\textit{Assignment Instructions}}: In an assignment instruction, value is assigned in two different ways: a constant can be assigned to a variable, for example $a=5$. A value stored in a variable can be assigned to another variable, for example $a=b$.\\
\textbf{\textit{Arithmetic Instructions}}: In an arithmetic instruction, input is read from one or more variables or from a programmer and the output is assigned to a variable after performing the required arithmetic operations. Example: $a=b+c$, $a=b+3$, $a=b+c+a$, $a=c-b*5$.\\
\textbf{\textit{Conditional instructions}}: Conditional instructions reads data from one or more variables or constant values, performs logical operations and decides either True or False. In conditional instructions, output is not necessarily assigned to any variable .\\ 
\textbf{\textit{Iterative Instructions}}: Iterative instructions will execute repeatedly until the specified condition fails .  \\ 
\textbf{\textit{Control transfer Instructions}}: These instructions switches the control from one instruction to another. For example, instructions like break, continue, goto, jump etc are control transfer instructions.  \\
\textbf{\textit{Input Instructions}}: Input instructions reads the data from input devices and writes to memory. In example: $ \lq Read \; a,b \rq$, data is read from programmer through the input device and stored in memory location $a$,$b$ respectively.\\
\textbf{\textit{Output Instructions}}: Output instructions reads the data from memory and writes to an output device. Example: $print(a)$. \\
\textbf{\textit{Function calls}}: Function calls are a request made to another routine that executes a predetermined task.
\subsection{Categorization of Instructions based on Memory}
\par Knowing all the types of instructions with which a computer program is made, one could observe that, some instructions access the memory to read data from a variable, some instructions access the memory to write data to a variable. Some instructions access the memory for both reading a data and for writing the data. Some instructions do not access the memory at all. Based on the way an instruction accesses the memory , we categorize an instruction as Memory Access Instruction (MAI) and Non Memory Access Instruction (NMAI).\\
\textbf{Memory Access Instruction (MAI)}: Instructions that access the memory to perform the required operation. For example: arithmetic, conditional, input, output instructions are MAI. \\
\textbf{Non Memory Access Instruction (NMAI)}: Instructions that do not access the memory at all. For example, control transfer instructions like break, jump fall under this category. \\
In MAI, some instructions access the memory for read operation alone, some for write operation alone, some for both read and write. Based on the operations, performed by MAI on memory, these instructions are further classified into three categories : MA-READ, MA-WRITE, MA-READWRITE. \\
\textbf{MA-READWRITE(MARW)}: Instructions that access the memory for both read as well as write operations come under this category. For example, in Arithmetic instruction: $`c=a+b'$, data is read from memory locations $a$ and $b$ and written to a memory location $c$. In assignment instruction: $`a=b'$, data is read from memory location $b$ and written to $a$. \\
\textbf{MA-READ(MAR)}: Instructions that perform only read operation but no write operation are classified as MA-READ. For example, in conditional instruction: $`if(a>b)'$ data is only read from memory locations $a$ and $b$ but the output is not written to any variable. In the output instruction: $print(a)$, data is read from memory location $a$ and sent to an output device. \\
In conditional instruction: $if(a>b)$, data is read from memory locations $a,b$ and sent to processor for further computation. In instruction $print(a)$, data is read from memory and send to an output device. In general, data is read from memory and send to other Hardware Units(HU) in the computer system like processor or output devices.  \\
\textbf{MA-WRITE(MAW)}: Instructions that perform only write operation but no read operation are classified as MA-WRITE. For example, in assignment instruction $`a=5'$, a constant value is written to a memory location $a$. The instruction $a=5$ conveys that the programmer is writing 5 into the memory location $a$. $a=5$ differs from the instruction $a=b$ where data is read from location $b$ and written to location $a$. In that sense, we deem that the instruction $a=5$ means that the constant 5 is read from the programmer(PR) and written to the location $a$. In the input instruction $`read \; a,b'$ data is read from input device and written to memory locations $a$ and $b$. 
The following diagram summarizes the classification of instructions based on the memory access behavior.
\includegraphics[scale=0.4]{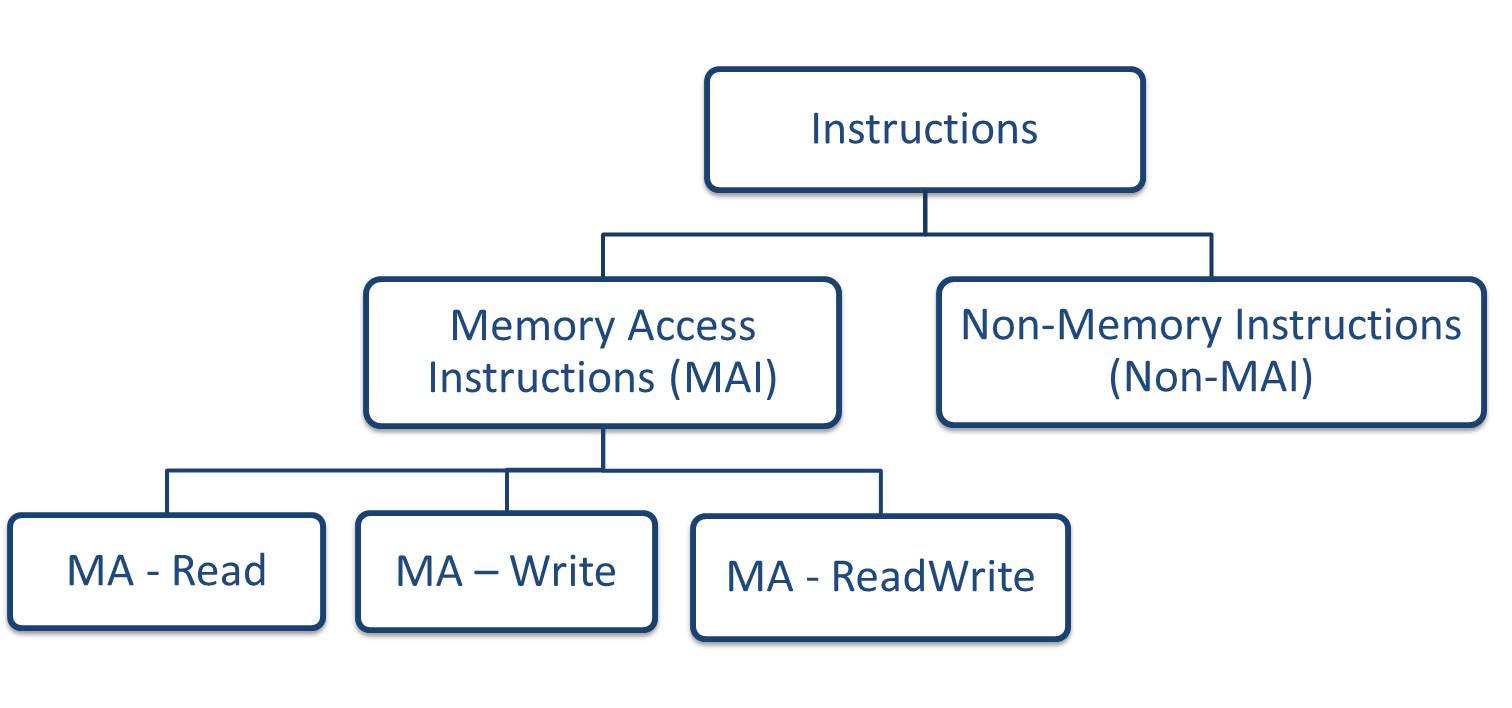}
\subsection{Representation of MAI as a pair of variables}
As mentioned, typically every MARW instruction will read the data from more than one memory location and write the data into a memory location. This gave an idea that one can represent a MARW instruction $i$ as an ordered pair of set of variables $(R,W)$ where R is a set that contains all the variables from which the instruction $i$ reads the data and W is a set with a single variable to which $i$ writes the data. Eventually, this representation of MAI as ordered pair of set of variables can be applied to MAR instructions and MAW instructions also. A MAR instruction $i$ can be thought of $[R,HU]$ where data is read from variables in R and written to a hardware unit $HU$.  A MAW instruction $i$ can be thought of $[HU,W]$ or $[PR,W]$ where data is read from either programmer(PR) or Hardware Unit(HU) and written to a variable in W. For example,
\begin{itemize}
\item A MA-READWRITE instruction $`c=a+b'$ is written as pair $[\{ a,b \},\{ c \} ]$. Data is read from variables $a$ , $b$ and output is written to $c$.
\item MA-READ instruction $`if(a >b)'$ is represented as $[\{ a,b \},\{ HU \} ]$. Data is read from $a$,$b$ and send to a Hardware Unit(HU).
\item MA-WRITE instruction $`a=5'$ is represented as $[\{ PR \},\{ a \} ]$. A value initialized in the instruction by the programmer(PR) is read and written to variable $a$.
\end{itemize}
Thus we conclude that every MA instruction can be represented as pair of set of variables $(R,W)$.
\subsection{Parameterization of program}
Thus, every program P is a collection of finite set of instructions $I=\{ i_1, i_2,...i_n \} $, which operate on the data stored in finite set of memory allocations $V= \{ v_1, v_2,...v_m \} $. $MAI$ is a finite set which contains all MA Instructions where $MAI \subseteq I$.
Based on the discussion so far, any program P can be parameterized with $I, V ,W ,HU ,PR$. Accordingly we write $P$ as $P(I,V \cup \{ HU,PR \} ,MAI)$. We consider $HU$ and $PR$ as fixed variables as data is read from and written to $HU$ and data is read from $PR$ alone. Thus, we conclude that a program is made up of all these following components:
\begin{itemize}
\item Set $I$, finite set of instructions $\{i_1, i_2,...i_n\} $ 
\item Set $V$, finite set of memory allocations or variables $\{v_1, v_2,...v_p\}$ 
\item Set $MAI$, finite set of MA instructions where $MAI \subseteq I$. An instruction $i \in MAI$ can be written as ordered pair $[R,W]$ where $R,W \subseteq V$.   
\item $HU$ represents the set of hardware units i.e. input devices, output devices, processor and any other hardware unit in the computer system.
\item $PR$ is the set of constant values initialized in the program P by the programmer.
\end{itemize}
\section{Directed graph Representation of a program}
\par As mentioned earlier, main aim of the paper is to automate the identification of data dependencies with which one can decide whether a sequential program can be parallelized or not. For that purpose, we first transform a program P to an equivalent directed graph called graph of P written as $G_P$.\\
\textbf{Labeled Directed Graph}: A labeled directed graph G=(N,E,L) where $N$ is the finite set of nodes $\{n_1,n_2,...n_k \} $. $E=\{e_{ij}= (n_i,n_j)| n_i,n_j \in N\}$, where the pair $(n_i,n_j)$ indicates a directed edge that starts from node $n_i$ and terminates at $n_j$. A mapping $L:E\rightarrow S$ i.e. every edge is assigned a label from the elements of S, called the set of labels. \\
All instructions in a given program P are indexed sequentially with the positive integers $1,2,...n$. First instruction in a program is indexed as 1, second instruction as 2 and so on. For $i_n \in I$, we call index of $i_n=n$. We know that every instruction $i_n$ can be written as the pair $[R,W]$. In other words, $index[i_n]=index([R,W])=n$.\\
A program $P=(I,V\cup\{PR,HU\},MAI)$ is transformed into a directed labeled graph $G_P=(V\cup\{PR,HU\},E,L)$ as follows:
\begin{itemize}
\item Set of nodes of $G_P$ are the set of variables $V \cup \{PR,HU \}$.
\item For every ordered pair of sets $(R,W) \in MAI$, we include the edges $\{[r,w])| \forall r\in R, w\in W\}$. 
\item Every edge in $G_P$ is labeled with elements from label set S which contains indices of instructions in I. $L:E\rightarrow \{1,2,...n\} $ such that $L((r,w))=k$ if $index([R,W])=k$ such that $(r,w) \in E, r \in R, w \in W$.
\end{itemize}
We use the notation $(.,.)$ to represent the edges of the graph and $[.,.]$ indicates the pair of sets R,W for representing memory access instructions.\\ 
 In example 1, $I= \{i_1,i_2,i_3,i_4 \}$, $V=\{ a,b,c,d \}$ and MAI=I as all instructions in program P are memory access instructions. To construct $G_P$, V acts as nodes N. For instruction $1:[ \{ a,b \} ,\{ c \} ]$, we include the edges $(a,c)$ and $(b,c)$ with labels $L((a,c))=1$ and $L((b,c))=1$ are added to $G_P$. For instruction $2:[ \{ a,PR \} ,\{ d \} ]$, edges $(a,d)$ and $(PR,d)$ with labels $L((a,d))=2$ and $L((PR,d))=2$ are added. For instruction $3:[ \{ c,d \} ,\{ HU \} ]$, edges with label $L((a,HU))=3$ and $L((d,HU))=3$ are added. For instruction $4:[ \{ a,PR \} ,\{ b \} ]$, edges with label $L((a,b))=4$ and $L((PR,b))=4$ are added to $G_P$.

\begin{tabular} { l c r }
 \hline
 \multicolumn{3}{c}{Example 1} \\
 \hline            
 (a) & (b) & (c)\\
 \hline
\begin{lstlisting}[mathescape=true]
void func1(int a,int b)
{
$1:$ c=a+b;
$2:$ d=a-10;
$3:$ if(c>d)
$4:$ b=a+10
}
\end{lstlisting}   & 

\begin{tikzpicture}[scale=2]
\GraphInit[vstyle=Normal]
\SetUpEdge[style={->}]
  \SetGraphUnit{3}
  \Vertex{b}
  \Vertex[x=-1.4,y=0.0]{a}
  \Vertex[x=-1.4,y=-1.0]{c}
    \Vertex[x=0,y=-1.2]{d}
  \Vertex[x=-0.6,y=-2.0]{HU}
  \Vertex[x=0.8,y=-0.6]{PR}
  \Edge[label = $1$](a)(c)
  \Edge[label = $1$](b)(c)
   \Edge[label = $2$](PR)(d)
   \Edge[label = $3$](c)(HU)
   \Edge[label = $3$](d)(HU)
       \Edge[label = $4$](PR)(b)

      \tikzset{EdgeStyle/.style = { ->,bend right}}
  \Edge[label = $2$](a)(d)
      \tikzset{EdgeStyle/.style = { ->,bend left}}

  \Edge[label = $4$](a)(b)

  \end{tikzpicture}   &   
\scalebox{0.8}{
\begin{tabular}{| c | c | c | c | c | c | c |}
\hline
& a & b & c  & d& PR & HU  \\  \hline
a & &$4$ &$1$  &  $2$& &   \\   \hline
b & & & $1$ &  &  &   \\   \hline
c & & &  & & & $3$  \\   \hline
d & & &  & & &$3$    \\  \hline   
PR & &$4$ &  &$2$ & &     \\  \hline   
HU & & &  & & &    \\  \hline   
\hline
\end{tabular}}
\\  \hline
\end{tabular}
The adjacency matrix of graph $G_P$ is shown in example 1(c), rows gives the \textbf{\textit{read}} information about the variables and columns gives the \textbf{\textit{write }} information. Scanning column $c$ of the matrix tells that variable $c$ is accessed for $`write'$ in instruction $1$ and row of $c$ shows variable $c$ is accessed for $`Read'$ in instruction $3$. \\
 The procedure by which we convert $P(I,V\cup \{PR,HU \} ,W)$ into a simple edge labeled graph $G_p(N\cup \{PR,HU \},E,L)$ is discussed in algorithm 1.  Graph $G_p$ is represented using adjacency matrix.
\begin{algorithm}
\caption{Convert Program $P$ to Directed edge-labeled graph $G_P$}
\begin{algorithmic}[1]
\Procedure{Program to Graph}{}\newline
\textbf{Input:} Program $P(I,V\cup \{PR,HU \})$ \newline
\textbf{Output:} Graph $G_p(N\cup \{PR,HU \},E,L)$  \label{euclid}
\For{each instruction $[R,W] \in I,index[R,W]=k$}
 \If{MAI-verification()}
      \For{every $ r \in R$ and $w \in W$}
          \State{$E=E \cup \{(r,w) \}$}
          \State{$L([r,w])=k$}
      \EndFor
  \EndIf    
\EndFor

\EndProcedure
\end{algorithmic}
\end{algorithm}
The procedure works as follows. Loop in line 2 takes each instruction in a chronological order and line 3 calls a MAI-verification procedure that checks whether an instruction is a Memory Access Instruction (MAI) or not. If instruction $i_k:[R,W]$ is MAI, then for each $r \in R$ an edge $(r,w)$ with label $k$ is added to graph $G_p$, shown in lines 4-6. \\
The running time of this algorithm depends on two components: procedure to check whether an instruction is MAI or not and number of edges added to $G_p$. MAI-verification procedure is called for every instruction of I. Every programming language, will have few keywords, say jump, break, exit etc, such that the presence of these keywords in an instruction does not require the access of memory. Let $`m'$ be the number of keywords in a programming language which does not require any memory access. $`m'$ is a constant for a specific programming language. If each MAI-verification procedure call has to go through maximum of \textit{m} comparisons to conclude the category of an instruction and there are $n$ instructions in a program, procedure takes $O(n)$. For every instruction, set of edges are added to $G_p$ so totally for all instructions $E$ edges are appended to $G_P$, adding each edge takes a constant time, $E$ edges take $O(E)$.
\begin{theorem}
 Given a program P, there exists an unique simple edge labeled graph $G_P$ that corresponds to $P$. 
 \end{theorem}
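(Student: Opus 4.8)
The statement is really a \emph{well-definedness together with uniqueness} claim for the construction carried out by Algorithm~1, so the plan is to verify that (i) the procedure always yields a legitimate simple edge-labeled graph, and (ii) the output depends only on the parameters $(I,V,MAI)$ of $P$ and on nothing else.

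For existence I would exhibit $G_P=(N,E,L)$ produced by the three bullet rules of the construction: take $N=V\cup\{PR,HU\}$, which is finite because $V$ is finite; take $E=\bigcup_{i\in MAI}\{(r,w):\ r\in R_i,\ w\in W_i\}$ where $i=[R_i,W_i]$, which is a subset of $N\times N$ and hence a legitimate edge set; and, since $E$ is a \emph{set}, there is at most one edge for each ordered pair $(n_i,n_j)$, so $G_P$ is simple in the sense of the earlier definition. Finiteness of $I$ and of each $R_i,W_i$ guarantees that the loops of Algorithm~1 terminate, so the construction is effective.

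The one delicate point — and the step I expect to be the real obstacle — is checking that the labeling $L$ is a genuine \emph{function} on $E$. A single ordered pair $(r,w)$ can be forced by more than one instruction (e.g.\ if both instruction $k$ and instruction $k'$ read $r$ and write $w$), and then ``$L((r,w))=k$'' and ``$L((r,w))=k'$'' seem to conflict. I would resolve this by reading the construction as assigning to each edge the \emph{set} of indices that induce it, $L((r,w))=\{\,k:\ r\in R_k,\ w\in W_k,\ i_k\in MAI\,\}\in 2^{\{1,\dots,n\}}$; this is exactly what ``labeled with elements from the label set $S$'' permits, it coincides with the single-index wording whenever the set is a singleton (as in Example~1), and it is manifestly determined by $P$. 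With this reading $L:E\to S$ is a well-defined map and $G_P$ is a bona fide simple edge-labeled graph.

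Uniqueness then follows because every step above is a deterministic function of $P$: the node set is fixed once $V$ is, the edge set is the displayed union over $MAI$ with no choices involved, and each edge label is the displayed index set, again with no choices. Hence any two runs of the construction on the same $P$ return identical triples $(N,E,L)$. Conversely, if $G'=(N',E',L')$ is \emph{any} graph ``corresponding to $P$'' in the sense of satisfying the three defining bullet conditions, then $N'=V\cup\{PR,HU\}=N$, the edge set $E'$ must be the same union and so $E'=E$, and $L'$ must attach to each edge the same index set, so $L'=L$; thus $G'=G_P$. This establishes both existence and uniqueness.
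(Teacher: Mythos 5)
Your proposal is correct, and it is in fact more complete than the argument the paper gives. The paper proves only uniqueness: it supposes two graphs $G_P'$ and $G_P''$ correspond to $P$, shows their vertex sets coincide, and then checks that the identity map is an isomorphism by verifying that an edge $(a',b')$ of one graph is an edge of the other (because both record the same read/write relation of $P$). It never establishes existence (this is left implicit in the construction and Algorithm~1), and --- more importantly --- its isomorphism argument never checks that the \emph{labels} agree, so uniqueness of $L$ is simply not addressed. You, by contrast, prove existence explicitly, and you isolate the one genuinely delicate point: when two distinct instructions $k$ and $k'$ both read $r$ and write $w$, the prescriptions $L((r,w))=k$ and $L((r,w))=k'$ clash on a simple graph, so either $L$ must be set-valued or the graph must admit parallel edges. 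This is a real issue the paper silently skips (its own figures, e.g.\ the two edges from $c$ to $s$ labeled $5.1$ and $5.2$ in Example~4, and the edges $(b,a)$ labeled $2$ and $4$ in Example~6, actually use parallel edges, contradicting ``simple''), and your set-valued reading of $L$ is a legitimate repair that makes the statement true as written. With that reading, your determinism argument for uniqueness of the full triple $(N,E,L)$, plus the converse check that any graph satisfying the three defining conditions must equal $G_P$, subsumes the paper's identity-isomorphism argument. The only caveat is that your fix changes the codomain of $L$ from $\{1,\dots,n\}$ to its power set, which is a (necessary) amendment of the paper's definition rather than a consequence of it; it would be worth stating that explicitly as a convention.
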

\begin{proof}
 Suppose there corresponds two graphs $G_P^\prime $ and $G_P^{\prime\prime}$ to the program P. \\
Let $G_P^\prime = (V^\prime \cup \{PR,HU \},E^\prime,L^\prime) $  \\
 $G_P^{\prime\prime} = (V^{\prime\prime} \cup \{PR,HU \},E^{\prime\prime},L^{\prime\prime}) $ \\
First we show that, If $a \in V^\prime $ then $a \in V^{\prime\prime}$ and vice versa. \\
Let $a \in V^\prime$, \\
$ \iff a $ is a vertex in $G_P^\prime$ that corresponds to program P. \\
$ \iff a $ is a variable in program P. \\
$ \iff a $ is a vertex in $G_P^{\prime\prime}$ since $G_P^{\prime\prime}$ is a graph that corresponds to program P. \\
$ \implies a  \in V^{\prime\prime}$ \\
Therefore, all the elements of $V^\prime$ are also in $V^{\prime\prime}$. \\
$ \implies $ All the elements of $V^\prime \cup \{ PR,HU \} $ are also in $V^{\prime\prime} \cup \{ PR,HU \} $. \\
 Hence, number of elements in $V^\prime \cup \{ PR,HU \} $ is same as number of elements in $V^{\prime\prime} \cup \{ PR,HU \} $.     \myeq{1}  \\
 Now, we show that $G_P^\prime$ is isomorphic to $G_P^{\prime\prime}$. 
 \begin{enumerate}
 \item Consider the identity function $f: V^\prime \cup \{ PR,HU \} \rightarrow V^{\prime\prime} \cup \{ PR,HU \} $ such that $f(a)=a$,where $a \in V^\prime$, $f(PR)=PR$ and $f(HU)=HU $. \\
 By equation 1, $f$ is bijective since $f$ is identity function.
 \item If $(a^\prime ,b^\prime) \in V^\prime $ then $(f(a^\prime) ,f(b^\prime)) \in V^{\prime\prime} $ \\
 Let $(a^\prime ,b^\prime) \in V^\prime $, there is an instruction in P such that data is read from $b^\prime$ and written in the location $a^\prime$. \\
$ \implies $ There is an edge $(a^\prime ,b^\prime) \in V^{\prime\prime} $ since $G_P^{\prime\prime} $ is the graph that corresponds to P. \\
 $ \implies$ the edge $(f(a^\prime) ,f(b^\prime)) \in V^{\prime\prime} $. \\
 Therefore,  $G_P^\prime \cong G_P^{\prime\prime} $    
   \end{enumerate}
  \end{proof}
\section{Data Dependence Identifier}
Two instructions $i_j$ and $i_k$ in a program are said to be \textit{data dependent} if both access the same memory location one for $`read'$ and the other for $`write'$. There are four types of data dependencies \cite{dd1},\cite{dd2}:
\begin{itemize}
\item \textbf{Flow dependence} exists between two instructions $i_j$ and $i_k$ if $i_j$ writes to a memory location that $i_k$ reads later.For example, in the code given below:
 \begin{lstlisting}[mathescape=true]
$I_1:$ c=a+b;
$I_2:$ d=c-b;
\end{lstlisting}  
Instruction $I_2$ uses the value computed by $I_1$. This is called as flow dependence as the data flows from $I_1$ to $I_2$.
\item \textbf{Anti dependence} exists between two instructions $i_j$ and $i_k$ if $i_j$ reads from a memory location that $i_k$ writes later. In the code given below, 
\begin{lstlisting}[mathescape=true]
$I_1:$ c=a+b;
$I_2:$ a=b+3;
\end{lstlisting}  
Instruction $I_2$ computes $a$, $I_1$ has read the old value of $a$ before $I_2$. This is called as anti dependence.
\item \textbf{Output dependence} exists between two instructions $i_j$ and $i_k$ if $i_j$ writes to a memory location that $i_k$ also writes later. 
\item \textbf{Input dependence} exists between two instructions $i_j$ and $i_k$ if $i_j$ reads from a memory location that $i_k$ also reads later. 
\end{itemize}  
Data dependence testing is a process to identify all such dependencies in a given program. In this section, we discuss the process by which the graph $G_P$ identifies the data dependencies in a given program P. For this reason, we call our $G_P$ as DDI. Algorithm 2 illustrates the process how DDI identifies whether data dependence among the instructions exists in a program. The input to the algorithm is graph $G_p$ of program P as discussed in previous section. The algorithm outputs the type of data dependence that exists. 
\begin{algorithm}
\caption{Identification of Data Dependencies in a program}
\begin{algorithmic}[1]
\Procedure{Data Dependencies}{}\newline   \label{euclid}
\textbf{Input:} Graph $G_p(N,E,L)$
 \For{every $v \in N.G_p$}
   \If{$(L((u,v)) \neq NULL) \&\& (L((v,u^1)) \neq NULL) $}
   		\If{$(L((u,v)) < (L((v,u^1)) $}
          \State{Flow Dependence exists}
    \ElsIf{$(L((u,v)) > (L((v,u^1)) $}
          \State{Anti dependence exists}
    \EndIf
    \EndIf
       \If{$(L((u,v)) = k) \&\& (L((u^1,v)) = j) $}
       \If{$ k \neq j $}
               \State{Output dependence exists}
    \EndIf
    \EndIf
       \If{$(L((v,u)) = k) \&\& (L((v,u^1)) = j) $}
       \If{$ k \neq j $}
               \State{Input dependence exists}
    \EndIf
    \EndIf

\EndFor
\EndProcedure
\end{algorithmic}
\end{algorithm}
\par For every node $v \in N.G_p$(node set of $G_P$) the following condition is checked in lines 3-7: whether $v$ has an incoming edge $(u,v)$ and an outgoing edge $(v,u^\prime)$. The existence of these edges $((u,v),(v,u^\prime))$ confirm that data dependence exists between the respective instructions since variable $v$ is accessed for both $`Read'$ and $`Write'$. An incoming edge to $v$ with label $k$ interprets memory location $v$ is accessed for $`Write'$ in instruction number $k$ and outgoing edge with label $j$ from $v$ interprets memory location $v$ is accessed for $`Read'$ in instruction number $j$.\\
 Based on the labels of the edges: $L(u,v)$ and $L(v,u^\prime)$, we identify the nature of data dependence.
\begin{enumerate}
\item \textbf{Flow dependence:} If ${L((u,v)) < L((v,u^\prime))}$ means that $v$ is accessed for $`Write'$ first and then for $`Read'$.
\item \textbf{Anti dependence:} If ${L((u,v)) > L((v,u^\prime))}$ means that $v$ is accessed for $`Read'$first and then for $`Write'$.
\item \textbf{Output dependence:} If there exists edges $L((u,v))=j$ and $L((u^\prime,v))=k$ means that $v$ is accessed for $`Write'$ in instructions $j$ and $k$.
\item \textbf{Input dependence:} If there exists edges $L((v,u))=j$ and $L((v,u^\prime))=k$ means that $v$ is accessed for $`Read'$ in instructions $j$ and $k$.
\end{enumerate}
The running time of this algorithm is $O(|N|^2)$ as the entire row and column of each variable $v$ has to be scanned, where N is the number of number of variables used in the program. For the correctness of the above algorithm we prove the following theorem.
\begin{theorem}
Let P be a program and $G_P$ be the corresponding graph of P. If $G_P$ has a path joining the vertices $ v_i \xrightarrow{\text{l}} v_j \xrightarrow{\text{m}} v_k$, $v_j \notin \{ PR,HU \} $, then the pair of instructions $(i_l,i_m)$ has the data dependence problem. 
\end{theorem}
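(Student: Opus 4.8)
The plan is to unwind the definition of $G_P$ from Section 4 and match it against the definition of data dependence stated at the start of this section. Recall that, by construction, an edge $(r,w)$ carrying label $k$ is placed in $G_P$ precisely when the instruction $i_k=[R,W]$ has $r\in R$ and $w\in W$; in other words, a labelled edge from $r$ to $w$ with label $k$ is a faithful record of the fact that instruction $i_k$ reads the location $r$ and writes the location $w$. So the first step is simply to read off the two edges of the given path in these terms.

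From the edge $(v_i,v_j)$ with label $l$ we obtain that instruction $i_l$ reads $v_i$ and writes $v_j$; in particular $v_j$ lies in the write set $W_l$, so $i_l$ performs a write to the memory cell $v_j$. From the edge $(v_j,v_k)$ with label $m$ we obtain that instruction $i_m$ reads $v_j$ and writes $v_k$; in particular $v_j$ lies in the read set $R_m$, so $i_m$ performs a read of the memory cell $v_j$. Here the hypothesis $v_j\notin\{PR,HU\}$ does the essential work: it guarantees that $v_j$ is a genuine program variable in $V$, i.e.\ an actual memory location, rather than one of the auxiliary symbols $PR$ (constants supplied by the programmer) or $HU$ (a hardware unit), for which a shared incidence would not correspond to any real conflict over memory.

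Having established that the single memory location $v_j$ is written by $i_l$ and read by $i_m$, the conclusion follows immediately from the definition: two instructions that access a common memory location, one for writing and one for reading, are data dependent, and hence the pair $(i_l,i_m)$ has the data dependence problem. I would then sharpen the statement by comparing the labels: if $l<m$ the write precedes the read, giving a flow dependence, whereas if $l>m$ the read precedes the write, giving an anti dependence. This is exactly the case split carried out in lines 3--7 of Algorithm 2, so the theorem doubles as a correctness proof for that branch of the algorithm.

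I do not anticipate a real obstacle, since the argument is in essence a dictionary translation between the construction of $G_P$ and the definition of dependence. The only point requiring a word of care is the degenerate case $l=m$: if both edges originated from the same instruction, then the single-element write set forces $v_j=v_k$, and the ``dependence'' is merely a self-reference inside one statement such as $a=a+b$, not a relation between two distinct instructions. I would handle this by remarking that whenever the path is non-degenerate, i.e.\ $v_j\neq v_k$, the labels $l$ and $m$ are necessarily different (since $W_l=\{v_j\}$ and $W_m=\{v_k\}$), so the two instructions are genuinely distinct; the residual case $v_j=v_k$ with $l=m$ can be set aside as not being a dependence between a pair of instructions at all.
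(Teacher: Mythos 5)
Your proposal is correct and follows essentially the same argument as the paper: read off from the two labelled edges that instruction $i_l$ writes $v_j$ while instruction $i_m$ reads $v_j$, and invoke the definition of data dependence. Your additional remarks on the role of the hypothesis $v_j\notin\{PR,HU\}$ and on the degenerate case $l=m$ are sensible refinements the paper leaves implicit, but they do not change the route of the proof.
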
 
\begin{proof}
A pair of instructions in a program will have the data dependence if both instructions access the same memory location for both the operations : read and write. \\
Let $G_P$ has a path joining $v_i$ and $v_k$, of length 2 such that $v_i \xrightarrow{\text{l}} v_j \xrightarrow{\text{m}} v_k$ i.e., there is an edge $(v_i,v_j) \in E $ with the label $l$ and an edge $(v_j,v_k) \in E $ with label $m$. 
\begin{enumerate}
\item $(v_i,v_j) \in E $ with label $l \implies $ $l^{th}$ instruction of P instructs that the memory location $v_i$ is accessed for `Read' and $v_j$ is accessed for `Write'.  
\item $(v_j,v_k) \in E $ with label $m \implies $ $m^{th}$ instruction of P instructs that the memory location $v_j$ is accessed for `Read' and $v_k$ is accessed for `Write'.  
\end{enumerate}
1 and 2 $ \implies $ the $l^{th}$ instruction accesses $v_j$ for `Write' and the $m^{th}$ instruction accesses $v_j$ for `Read'. \\
$\implies $ $l^{th}$ and $m^{th}$ instruction of P accesses the same memory location for both `Read' and `Write'.\\
$ \implies $ Instructions $(i_l,i_m)$ has the data dependence.
\end{proof}
\begin{corollary}
If $l < m$ then the instructions $(i_l,i_m)$ have flow dependence.
\end{corollary}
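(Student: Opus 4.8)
The plan is to read the conclusion off the edge-labeling conventions already used in the proof of Theorem 2, together with the fact that instruction indices record chronological program order. First I would recall from the construction of $G_P$ that an edge $(v_i,v_j)$ carrying label $l$ arises from an instruction $i_l=[R,W]$ with $v_i\in R$ and $v_j\in W$; that is, $i_l$ reads $v_i$ and writes $v_j$. Likewise the edge $(v_j,v_k)$ with label $m$ comes from $i_m$ reading $v_j$ and writing $v_k$. Hence the shared middle vertex $v_j$ is \emph{written} by $i_l$ and \emph{read} by $i_m$, and the hypothesis $v_j\notin\{PR,HU\}$ carried over from Theorem 2 guarantees that $v_j$ is a genuine memory location, so these read/write operations on $v_j$ are meaningful.

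Next I would invoke the indexing scheme fixed in the section on the directed-graph representation: the instructions of $P$ are numbered $1,2,\dots,n$ in the order they occur, so an instruction's index is exactly its position in the execution sequence. Thus $l<m$ says precisely that $i_l$ executes before $i_m$. Combining this with the previous paragraph, $i_l$ writes to the memory location $v_j$ which $i_m$ reads \emph{later} — and this is, verbatim, the definition of flow dependence stated at the start of the Data Dependence Identifier section. By Theorem 2 the pair $(i_l,i_m)$ already has the data dependence problem; the extra information $l<m$ pins that dependence down as flow dependence, completing the argument.

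There is no serious obstacle here; the only points needing care are bookkeeping. One must keep the orientation convention straight — an edge leaving a vertex records a \emph{read} of that vertex, an edge entering it records a \emph{write} — which is why $v_j$'s role flips from ``write target of $i_l$'' (incoming edge $v_i\to v_j$) to ``read source of $i_m$'' (outgoing edge $v_j\to v_k$). One should also note that $l<m$ forces $l\neq m$, so $i_l$ and $i_m$ are genuinely distinct instructions and the dependence is non-trivial, and that the whole argument uses only the labeled length-two path supplied by Theorem 2, so no new hypotheses are introduced.
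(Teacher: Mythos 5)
Your proposal is correct and is exactly the "immediate" argument the paper omits: it reuses the read/write interpretation of the two labeled edges from the proof of Theorem 2, and then appeals to the chronological indexing of instructions plus the stated definition of flow dependence. Nothing further is needed.
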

\begin{corollary}
If $l > m$ then the instructions $(i_l,i_m)$ have anti dependence.
\end{corollary}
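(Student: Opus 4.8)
The plan is to derive this corollary directly from Theorem~2 together with the definition of anti dependence recalled at the start of this section. Since the statement is phrased in the scope of Theorem~2, I would first make the standing hypothesis explicit: $G_P$ contains a length-two path $v_i \xrightarrow{\text{l}} v_j \xrightarrow{\text{m}} v_k$ with $v_j \notin \{PR,HU\}$. Theorem~2 already guarantees that the pair $(i_l,i_m)$ has a data dependence, and, more to the point, its proof establishes the finer fact I actually need: the $l$-th instruction accesses $v_j$ for \emph{write}, while the $m$-th instruction accesses $v_j$ for \emph{read}.

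Next I would invoke the hypothesis $l > m$. Because instructions are indexed by their position in program order (the first instruction is indexed $1$, the second $2$, and so on), $l > m$ means that $i_m$ executes before $i_l$. Combining this with the read/write roles just identified: $i_m$ reads the memory location $v_j$, and only afterward does $i_l$ overwrite that same location. This is precisely the pattern in the definition of anti dependence — an instruction reads a location that a subsequent instruction writes — so the pair $(i_l,i_m)$, with $i_m$ the reader and $i_l$ the later writer, has anti dependence, which is the claim.

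I expect the argument to be essentially immediate once Theorem~2 and its proof are in hand; the only delicate points are bookkeeping rather than mathematics. First, I should state clearly that the corollary inherits the path hypothesis from the theorem, rather than asserting something about arbitrary indices $l,m$. Second, I would note that $l>m$ forces $i_l\neq i_m$, so we genuinely have two distinct instructions and the notion of a dependence \emph{between two instructions} applies. Third, I would flag the one modelling assumption in use, namely that an instruction's index coincides with its order of execution; this is exactly how indices were defined in Section~4 and is unconditional in the paper's graph model, so it needs no further justification. The companion statement for flow dependence (Corollary~1) would be proved in the same way, merely interchanging the roles of $<$ and $>$.
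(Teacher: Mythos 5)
Your argument is correct and is precisely the ``immediate'' deduction the paper has in mind: the paper omits the proof entirely (``Since the proof of corollary are immediate, we have not described the proof''), and your proposal simply spells out that Theorem~2's proof already identifies instruction $i_l$ as the writer of $v_j$ and $i_m$ as its reader, so $l>m$ places the read before the write, matching the stated definition of anti dependence. Your bookkeeping remarks (inheriting the path hypothesis, $l\neq m$, index equals program order) are sound and add only clarity, not a different method.
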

\begin{corollary}
If $L((v_i,v_j))=l$ and $L((v_k,v_j))=m$ and  then the instructions $(i_l,i_m)$ have output dependence.
\end{corollary}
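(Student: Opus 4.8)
The plan is to prove Corollary 4 by reducing it directly to the definition of output dependence together with the graph-construction rules established in Section 4. Recall that output dependence exists between instructions $i_l$ and $i_m$ precisely when both instructions write to the same memory location. So the whole task is to show that the two hypotheses $L((v_i,v_j))=l$ and $L((v_k,v_j))=m$ force $i_l$ and $i_m$ to both write to $v_j$, and that $v_j$ is a genuine memory location (not $PR$ or $HU$).

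First I would unpack $L((v_i,v_j))=l$. By the edge-labeling rule in the construction of $G_P$, an edge $(r,w)$ receives label $k$ exactly when there is an instruction $i_k = [R,W]$ with $r \in R$ and $w \in W$. Hence $L((v_i,v_j))=l$ means instruction $i_l$ has $v_j$ in its write-set $W$, i.e. $i_l$ writes to $v_j$. Symmetrically, $L((v_k,v_j))=m$ means instruction $i_m$ writes to $v_j$. This is the same move used in the proof of Theorem 2, just applied to the head of each edge rather than to a shared middle vertex of a length-2 path. Then, since both $i_l$ and $i_m$ write to the common location $v_j$ with $l \neq m$ (instruction indices of distinct instructions are distinct, matching the $k \neq j$ guard in Algorithm 2), the definition of output dependence is satisfied, and $(i_l, i_m)$ has output dependence.

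I would also note the one subtlety worth making explicit: the edge head $v_j$ must be an actual program variable for this to count as a memory-access conflict. In the construction, $PR$ only ever appears as the source of an edge (it is read from, never written to) and $HU$ is treated as a fixed pseudo-variable; but since both incoming edges $(v_i,v_j)$ and $(v_k,v_j)$ terminate at $v_j$, and write-targets are drawn from $V$, we have $v_j \in V$ automatically. So unlike Theorem 2, no separate hypothesis $v_j \notin \{PR, HU\}$ is needed here — it follows from $v_j$ being the head of a labeled edge. I would state this as a one-line remark rather than a separate lemma.

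Honestly, there is no real obstacle: this corollary is essentially an immediate rephrasing of the edge-label semantics, and the argument is two applications of the construction rule plus the definition of output dependence. If anything, the only thing to be careful about is making the proof symmetric and clean — writing "by the same reasoning applied to $(v_k,v_j)$" rather than repeating the paragraph — and making sure the $l \neq m$ condition is acknowledged, since two edges into $v_j$ carrying the \emph{same} label would correspond to a single instruction writing $v_j$ from two reads, which is not an output dependence. That case-distinction mirrors the $k \neq j$ test in Algorithm 2 and should be flagged in the statement or the proof.
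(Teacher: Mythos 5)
Your main argument is correct and is exactly the reasoning the paper has in mind: unpack each edge label via the construction rule to conclude that $i_l$ and $i_m$ both have $v_j$ in their write sets, then invoke the definition of output dependence (the paper omits the proof entirely, declaring it immediate, so you have simply written out what it leaves implicit). Your observation that $l \neq m$ must be required --- since two edges into $v_j$ with the same label are just one instruction reading two operands --- is a genuine and worthwhile addition; the corollary as stated in the paper silently assumes it, and Algorithm~2 does test $k \neq j$.

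However, your closing remark contains an error. You claim that no hypothesis $v_j \notin \{PR,HU\}$ is needed because ``write-targets are drawn from $V$,'' so the head of any labeled edge is automatically a program variable. That is false in this construction: a MA-READ instruction is encoded as $[R,\{HU\}]$, so $HU$ is routinely the head of labeled edges --- in Example~1 the instruction $if(c>d)$ contributes edges $(c,HU)$ and $(d,HU)$ with label $3$. Two distinct output instructions, say $print(a)$ and $print(b)$, would therefore produce two differently-labeled edges terminating at $HU$, and your corollary as you have restated it would declare an output dependence between them even though no shared memory location is written. This is precisely the situation the paper's Note~1 flags for paths through $HU$. The fix is simply to retain $v_j \notin \{PR,HU\}$ as an explicit hypothesis (as Theorem~2 does) rather than arguing it away; with that restored, your proof is sound.
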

\begin{corollary}
If $L((v_j,v_i))=l$ and $L((v_j,v_k))=m$ and  then the instructions $(i_l,i_m)$ have input dependence.
\end{corollary}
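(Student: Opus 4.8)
The plan is to read the conclusion off the edge-labelling semantics fixed in Section 4, running the argument in exact parallel with the proof of Theorem 2. Note first that this is not literally an instance of Theorem 2: there the hypothesis is a length-two path $v_i\xrightarrow{l}v_j\xrightarrow{m}v_k$, whereas here the two labelled edges $(v_j,v_i)$ and $(v_j,v_k)$ share the common \emph{tail} $v_j$ rather than forming a path. So I would not cite Theorem 2 but reuse its underlying idea --- that a single edge of $G_P$ records one read and one write of one instruction --- together with the definition that two instructions have \emph{input dependence} exactly when both read from one and the same memory location; it then suffices to exhibit such a location.

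Next I would unpack the two hypotheses via the construction $P\mapsto G_P$. By that construction, every edge $(r,w)$ of $G_P$ carrying label $k$ comes from an ordered pair $[R,W]\in MAI$ with $r\in R$, $w\in W$ and $index([R,W])=k$. Hence $L((v_j,v_i))=l$ forces $v_j$ into the read-set of instruction $i_l$, i.e. $i_l$ reads the memory location $v_j$; symmetrically $L((v_j,v_k))=m$ gives that $i_m$ also reads $v_j$. Thus $v_j$ is read by both $i_l$ and $i_m$, so --- provided $l\neq m$, so that these are genuinely distinct instructions --- the pair $(i_l,i_m)$ has input dependence by definition; and if moreover $l<m$ then $i_l$ reads $v_j$ before $i_m$ does, which is exactly the ``reads $\ldots$ also reads later'' form of the definition.

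I do not expect any real obstacle. The only two points worth flagging are: (i) one must argue from the edge semantics rather than from Theorem 2, for the structural reason above; and (ii) the statement tacitly assumes $l\neq m$ (and, for a \emph{bona fide} dependence, arguably also $v_j\notin\{PR,HU\}$, the same side condition imposed in Theorem 2, since two reads of the constant pool $PR$ hardly constitute a meaningful dependence). With those caveats the proof is a two-line specialization of the $G_P$ construction, and the identical argument with ``read'' and ``write'' interchanged dispatches Corollary 3.
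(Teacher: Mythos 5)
Your proof is correct and is exactly the ``immediate'' argument the paper has in mind: it omits the proof of this corollary entirely, and unpacking the edge semantics of $G_P$ to show both $i_l$ and $i_m$ read the location $v_j$ is the intended two-line verification. Your observations that the hypothesis is two edges sharing a tail rather than a path (so Theorem 2 does not literally apply) and that $l\neq m$ and $v_j\notin\{PR,HU\}$ are tacitly assumed are apt but do not change the substance.
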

Since the proof of corollary are immediate, we have not described the proof.\\
\textbf{Note 1:} In a path of length 2: $v_i \xrightarrow{\text{l}} v_j \xrightarrow{\text{m}} v_k$, and $v_j \in {HU}$, then the pair of instructions $(I_l,I_m)$ need not have  the data dependence since HU is the vertex used for Read/Write from the hardware devices. Two different instructions with one instruction reading from a device and another instruction writing into the device may not have any link at all. For example,instructions print a and Read c do not have any connection among them. \\
\textbf{Note 2:} If $G_P$ has a path of length k, with the sequence of labels of the edges of the path as: $l_1,l_2,l_3,...l_k$. Then all the possible pairs of instructions $I_{l_1},I_{l_2},I_{l_3},...I_{l_k}$ has the data dependence.
 
\subsection{Data Dependence in arrays}
\par One of the fundamental and crucial job of data dependence identifier is to decide whether an array can be parallelized or not i.e., whether the elements in an array can be split into groups. Generally, loops are used to traverse elements in an array. If each index position of the array is accessed by each instance of the loop, then there exists no data dependence in that array. A variable(index position of an array) accessed for `Write' in one instance of execution is `Read' in other instance of loop by other instructions or vice versa, data flows between iterations of loop. If flow dependence among different instances of execution of loop exist, such data dependence is called \textit{loop carried dependence} or \textit{array dependence}. A loop can not be parallelized if such dependence exists. \\
In this section, we discuss how our model DDI, solves the complicated problem of identifying data dependencies in arrays with the help of an example.  \\
 \textbf{Data Dependence in One Dimensional Arrays} \\
 With our DDI, we can easily identify the data dependence in loops. As usual, the indices of array are considered as individual variables.\\
 Consider the program in Example 2. Normally a sequential loop executes in an increasing order of index numbers following the order of instructions. Here there are three instructions $4$, $5$, $6$ and the loop indexing goes from 2 to 4. Consider three instances of execution:
 \begin{lstlisting} [mathescape]
$4.1:$ a[2]=b[2]+c[2];
$5.1:$ a[3]=a[1]+c[1];
$6.1:$ c[1]=b[2];

$4.2:$ a[3]=b[3]+c[3] ;
$5.2:$ a[4]=a[2]+c[2];
$6.2:$ c[2]=b[3];

$4.3:$ a[4]=b[4]+c[4];
$5.3:$ a[5]=a[3]+c[3];
$6.3:$ c[3]=b[4];
\end{lstlisting}
The statements within the loop are denoted as $i.k$, $1\leq i \leq n$,  where $i$ represents the instruction number and $k$ represents the instance of execution. \\
We convert the loop instructions of P to $G_P$. In example 2, $I= \{ 4,5,6 \}$, $V= \{ a[0],c[0],a[1],b[1],c[1],a[2],b[2],c[2],a[3],b[3],c[3],a[4],b[4],c[4],a[5] \}$, $MAI=I$ as all instructions are memory access instructions. To construct $G_P$, all variables in $V$ acts as nodes $N$. For instruction $4.1:[ \{ b[1],c[1] \}, \{a[1]\} ]$, edges with label $L(([b[1],a[1]]))=4.1$ and $L((c[1],a[1]))=4.1$ are added to $G_P$. For instruction $5.1:[ \{ a[0],c[0] \}, \{a[2]\} ]$, edges with label $L((a[0],a[2]))=5.1$ and $L((c[0],a[2]))=5.1$ are added to $G_P$. Similarly for all instances of execution, edges are added to $G_P$ as shown in example 2(b).\\
In the above given instances of execution, variable $a[1]$ computed in the instruction $4.1$ is read in other instance of another instruction $5.2$. This is shown in $G_P$ as edges $(b[2],a[2])$ and $(a[2],a[4])$, which is a flow dependence. Similarly, variable $a[3]$ computed in $5.1$ is read in $5.3$. This is shown in $G_P$ as edges $(c[1],a[3])$ and $(a[3],a[5])$.   \\

\begin{tabular}{ l c r  }
 \hline
 \multicolumn{2}{c}{Example 2} \\
 \hline
 (a) & (b) \\
 \hline
 \begin{lstlisting} [mathescape]
void add()  
{
for($\overbrace{i=2}^{I_2}$;$\overbrace{i<5}^{I_2}$;$\overbrace{i++}^{I_3}$)
$4:$ a[i]=b[i]+c[i];
$5:$ a[i+1]=a[i-1]+c[i-1];
$6:$ c[i-1]=b[i];
}
\end{lstlisting}   &\begin{tikzpicture}[scale=1.8]
\GraphInit[vstyle=Normal]
\SetUpEdge[style={->}]
  \SetGraphUnit{3}
  \Vertex {a[2]}
  \Vertex[x=-1.2,y=-0.9]{b[2]}
    \Vertex[x=-0.4,y=-1.5]{c[2]}
        \Vertex[x=-0.9,y=-2.5]{b[3]}

  \Vertex[x=0.2,y=-3.0]{a[3]}
  \Vertex[x=1.2,y=-4.0]{a[1]}

  \Vertex[x=-0.6,y=-4.0]{c[1]}
        \Vertex[x=2.5,y=-3.0]{c[3]}
        \Vertex[x=1.8,y=-2.0]{a[4]}
        \Vertex[x=3.0,y=-1.0]{b[4]}
        \Vertex[x=2.0,y=-0.2]{c[4]}
   
       \Vertex[x=3.5,y=-1.2]{a[5]}

    \tikzset{EdgeStyle/.style = { ->,bend left}}
  \Edge[label = $4.1$](b[2])(a[2])
  \Edge[label = $4.1$](c[2])(a[2])
  \Edge[label = $5.1$](a[1])(a[3])
    \Edge[label = $5.1$](c[1])(a[3])
      \Edge[label = $4.2$](b[3])(a[3])
  \Edge[label = $5.3$](c[3])(a[5])
  \Edge[label = $4.2$](c[3])(a[3])
  \Edge[label = $6.2$](b[3])(c[2])
  \Edge[label = $6.3$](b[4])(c[3])

   \tikzset{EdgeStyle/.style = { ->,bend right}}
  
    \Edge[label = $5.2$](a[2])(a[4])
        \Edge[label = $5.2$](c[2])(a[4])

    \Edge[label = $4.3$](b[4])(a[4])
  \Edge[label = $4.3$](c[4])(a[4])
  \Edge[label = $5.3$](a[3])(a[5])
  \Edge[label = $6.1$](b[2])(c[1])

  \end{tikzpicture} 
\\
  \hline
\end{tabular}

Based on the corollary of theorem 2, we record the following results:
\begin{enumerate}
\item There exists \textit{flow dependence} in $G$ if for node $v \in V $ in $G$. If there is an edge $L((u,v))=s.i$ and an edge $L((v,u^`))=t.j$, where $j > i$. $ m \leq i,j \leq n $ where $m,n$ represent the loop bounds. $s,t$ are the indices of loop instructions. 
\item There exists \textit{Anti dependence} in $G$ if for node $v \in V $ in $G$. If there is an edge $L((u,v))=s.i$ and an edge $L((v,u^`))=t.j$, where $j < i$. $ m \leq i,j \leq n $, where $m,n$ represent the loop bounds. $s,t$ are the indices of loop instructions. 
\item There exists \textit{output dependence} in $G$ if for node $v \in V $ in $G$. If there are two  edges $L((u,v))=s.i$ and $L((u^`,v))=t.j$ where $t \neq s$ and $j \neq i$. 
\item There exists \textit{Input dependence} in $G$ if for node $v \in V $ in $G$. If there are two  edges $L((v,u))=s.i$ and $L((v,u^`))=t.j$ where $t \neq s$ and $j \neq i$.
\end{enumerate}
\textbf{Data Dependence in Two Dimensional Arrays} \\
The concept discussed for single loops is extended to nested loops. Example 3 illustrates the identification of data dependencies in a program with two dimensional arrays.

\begin{tabular}{ l c r  }
 \hline
 \multicolumn{2}{c}{Example 3} \\
 \hline
 (a) & (b) \\
 \hline
 \begin{lstlisting} [mathescape]
void add()  
{
for(i=1;i<3;i++)
   for(j=1;j<3;j++)
       $L_1:$ a[i][j]=c[i][j-1];
       $L_2:$ c[i][j]=a[i-1][j];
}
\end{lstlisting}   &\begin{tikzpicture}[scale=1.8]
\GraphInit[vstyle=Normal]
\SetUpEdge[style={->}]
  \SetGraphUnit{3}
  \Vertex {a[1][1]}
  \Vertex[x=-0.3,y=-0.9]{a[1][2]}
    \Vertex[x=-0.6,y=-1.9]{a[2][1]}
        \Vertex[x=3.8,y=-0.3]{a[2][2]}

  \Vertex[x=1.2,y=0.0]{c[1][1]}
  \Vertex[x=2.3,y=-1.3]{a[0][1]}
  \Vertex[x=1.5,y=-1.5]{c[2][1]}
        \Vertex[x=1.4,y=-2.7]{c[2][2]}
        \Vertex[x=2.2,y=0.2]{c[0][0]}
        \Vertex[x=3.2,y=-0.9]{c[1][2]}
        \Vertex[x=2.5,y=-2.1]{c[0][1]}
                \Vertex[x=2.6,y=-3.1]{a[0][2]}
                \Vertex[x=2.0,y=-3.7]{c[1][0]}


 
   \Edge[label = $L_2.3$](a[1][1])(c[2][1])

    \tikzset{EdgeStyle/.style = { ->,bend left}}
  \Edge[label = $L_1.2$](c[0][1])(a[1][2])
  \Edge[label = $L_1.3$](c[1][0])(a[2][1])
    \Edge[label = $L_2.1$](a[0][1])(c[1][1])

   \tikzset{EdgeStyle/.style = { ->,bend right}}
  
    \Edge[label = $L_1.1$](c[0][0])(a[1][1])
  \Edge[label = $L_2.4$](a[1][2])(c[2][2])
      \Edge[label = $L_2.2$](a[0][2])(c[1][2])
    \Edge[label = $L_1.4$](c[1][1])(a[2][2])

  \end{tikzpicture} 
\\
  \hline
\end{tabular}

\subsection{Data dependence in scalars}
\par Scalars are variables used to store a single data. Data dependence information of scalar variables will help to group the program instructions that can be executed in parallel. Identifying data dependencies in scalar variables present in loops will help to determine whether a loop can be parallelized or not. Here, we discuss the process by which our DDI model will identify the scalar dependencies present in loops.

\begin{tabular}{ l c r  }
 \hline
 \multicolumn{2}{c}{Example 4} \\
 \hline
 (a) & (b) \\
 \hline
 \begin{lstlisting} [mathescape]
void add()  
{
$1:$ c=0;
for($\overbrace{i=1}^{2}$;$\overbrace{i<3}^{3}$;$\overbrace{i++}^{4}$)
$5:$ s=c+i;;
$6:$ c=s;
}
\end{lstlisting}   &\begin{tikzpicture}[scale=1.8]
\GraphInit[vstyle=Normal]
\SetUpEdge[style={->}]
  \SetGraphUnit{3}
  \Vertex {c}
  \Vertex[x=-1.2,y=-0.9]{s}
    \Vertex[x=0,y=-1.5]{i}
          \Vertex[x=0.5,y=-0.9]{PR}
  \Edge[label = $1$](PR)(c)

  \Edge[label = $5.1$](i)(s)
  \Edge[label = $6.1$](s)(c)
      \tikzset{EdgeStyle/.style = { ->,bend left}}
    \Edge[label = $5.2$](c)(s)
  \Edge[label = $5.2$](i)(s)
        \tikzset{EdgeStyle/.style = { ->,bend left=25}}

  \Edge[label = $6.2$](s)(c)

   \tikzset{EdgeStyle/.style = { ->,bend right=60}}
    \Edge[label = $5.1$](c)(s)

    \end{tikzpicture} 
\\
  \hline
\end{tabular}

There exists flow dependence in scalar variable present in loop if the variable is read as well as  written in every instance of loop iteration. For a node $v$, if there exists an incoming as well as outgoing edge with labels of same loop instance then the flow dependence exists in a scalar variable in a loop. In Example 4, variables $c$ and $s$ are accessed for `Read' and `Write' in the same loop instance.
\subsection{Observations on existing data dependence tests}
 Many data dependence tests \cite{kong},\cite{uptal},\cite{omega},\cite{psarris} has been proposed for arrays. Almost all the tests are based on subscript analysis. If the array subscripts are linear, they are converted to the form linear equations and inequalities. The problem is about solving these linear equations and inequalities to check whether an integer solution exists. Existence of integer solution proves data dependence. The problem is more complicated in case of multidimensional arrays as it ends up in solving system of linear equations and inequalities.
\par There is always a trade-off between accuracy and complexity in the existing data dependence tests. Table 1 shows the experimental results from \cite{test1} on perfect benchmarks \cite{perfect}. The table displays the total number of data dependence problems, number of problems proven to be independent i.e., no data dependence exist by each test, number of problems where the test could not conclude whether data dependence exists or not, average time taken by each test per dependence problem to conclude whether data dependence exists or not. Omega test was able to prove that 35\% of the problems are not data dependent, where as Banerjee and I Test were able to do it for 30\% alone. On the other hand Omega test took 177ms on average per dependence problem where as Banerjee and I Test took only 8ms to perform dependence testing.
\begin{table}
\centering
\begin{tabular}{ |p{2cm}|p{3.5cm}|p{2cm}|p{2cm}|p{2cm}|  } 
 \hline
  \multicolumn{5}{|c|}{Perfect Benchmarks} \\
 \hline
 Test& Data Dependence Problems	 & Independent & May or may not be dependent& Time(msec)  \\
 \hline
   Banerjee   & 59936  & 17946(30\%) & 41990(70\%) & 8 \\
      \hline
      I-Test & 59936  & 17970(30\%) & 37827(63\%) & 8 \\
 \hline
      Omega Test & 59936  & 21232(35\%) & 32239(54\%) & 177  \\
 
 \hline
\end{tabular}
\caption{Comparison of Data Dependence Tests w.r.t accuracy and time}
\end{table}
\par We have theoretically proved(Theorem 2) that our DDI model can identify all kinds of data dependencies in a program in polynomial time with out any error.
\section{Basic Transformation Techniques}
 Basic transformations like constant propagation, dead code elimination and induction variable detection are performed on a program to optimize it. In this section, we will discuss how these transformations can be performed using our DDI model.  
\subsection{Dead code Elimination}
Dead code refers to the variables whose data is never used in the program. Removing such code in the program reduces the program's size. All compilers perform dead code elimination as a part of compiler optimization. Here we will discuss how our model identifies and eliminates dead code in a given program.
The following scenarios depict the presence of dead code in a given program:
\begin{itemize}
\item A variable has been initialized but it is never used in the program's execution.
\item Data computed in the program, but it is never been used to get the final output.
\end{itemize}
The following characteristics in the graph shows the existence of dead code:
\begin{enumerate}
    \item For a node $u$, there are no edges $(v,u)$ and $(u,u^`)$ where $ v,u^` \in N.G_P$.
    \item For a node $u$, there are edges $(v,u)$ but no edges $(u,u^`)$ where $ v,u^` \in N.G_P$.
    \item For a node $u$, there are edges where $L((v,u))=i$,$L((v,u))=j$ and $L((u,u^`))=k$ but no edges where $L((u,u^`))>i\; and <j$, $ v,u^` \in N.G_P$.
    \item For a node $u$, there are edges where $L((v,u))=i$,$L((v,u))=j$ and $L((u,u^`))=k$ but no edges where $L((u,u^`))>j$, $v,u^` \in N.G_P$.
\end{enumerate}
Algorithm illustrates the process of dead code elimination.
\begin{algorithm}
\caption{Dead code Elimination}
\begin{algorithmic}[1]
\Procedure{Dead code Elimination}{}\newline   \label{euclid}
\textbf{Input:} Graph $G_p(N,E,L)$
 \For{every $v \in N.G$}
   \If{$(L((u,v)) == k) \&\& (L((v,u^1)) == NULL) $}
          \State{remove $k$}
    \EndIf
    \If{$(L((u,v)) == k) \&\& (L((v,u^1)) < k) $}
          \State{remove $k$}
            \EndIf
    \If{$(L((u,v)) == k) \&\& L((u,v)) == m) \&\& (L((v,u^1)) > k \&\& <m == NULL) $}
          \State{remove $k$}
            \EndIf

   \If{$(L((u,v)) == NULL) \&\& (L((v,u^1)) == NULL) $}  
          \State{remove $v$}
       
    \EndIf
 \EndFor
\EndProcedure
\end{algorithmic}
\end{algorithm}

Lines 9-10 of algorithm 3 identifies a node $v$ that doesn't have any incoming or outgoing edges which elucidate that a variable is initialized in the program but no data is assigned and never used in the program. Such variables are considered as unused and are removed in the program. Lines 3-4 identifies a node $v$ that have only incoming edges but no outgoing edges which explains that data is assigned to a variable but that variable is never used in the program. In example 5, in the program one could observe that the variable $c$ is never used in any instruction of the program. In $G_P$, an edge $L((PR,c))=1$ tells that a constant value is written to $c$ but $c$ doesn't have any outgoing edges which elucidate that variable $c$ is never read in the program. In instruction $2:[\{b,PR\},\{a\} ]$, $\forall I:[R,W], a\notin R $ i.e. $a$ is computed in instruction 2 but in no other instruction $a$ is read. In graph, $L((u,a)) \neq NULL $ and $L((a,u^1))=NULL$ which tells node $a$, is having only incoming edges but no outgoing edges means variable $a$ is not read any where in the program. Therefore, value computed in instruction 2 and variable $c$ is never used for output computation in the program.

\begin{tabular} { l c r }
 \hline
 \multicolumn{2}{c}{Example 5} \\
 \hline            
 (a) & (b) \\
 \hline
\begin{lstlisting}[mathescape=true]
void func1(int a,int b)
{
$1:$ int a,b=3,c=5,d;
$2:$ a=b+5;
$3:$ d=b*10;
$4:$ print d;
}
\end{lstlisting}   & 

\begin{tikzpicture}[scale=2]
\GraphInit[vstyle=Normal]
\SetUpEdge[style={->}]
  \SetGraphUnit{3}
  \Vertex{b}
  \Vertex[x=-1.0,y=-0.3]{a}

  \Vertex[x=0.4,y=-2.0]{c}
    \Vertex[x=0,y=-1.2]{d}
  \Vertex[x=-0.6,y=-2.0]{HU}
  \Vertex[x=0.8,y=-0.6]{PR}
  \Edge[label = $1$](PR)(c)
    \Edge[label = $1$](PR)(b)

  \Edge[label = $2$](b)(a)
   \Edge[label = $3$](b)(d)
          \Edge[label = $3$](PR)(d)
   \Edge[label = $4$](d)(HU)

   \tikzset{EdgeStyle/.style = { ->,bend left}}
   \Edge[label = $2$](PR)(a)

  \end{tikzpicture}   
\\  \hline
\end{tabular}

\begin{tabular} { l c r }
 \hline
 \multicolumn{2}{c}{Example 6} \\
 \hline            
 (a) & (b) \\
 \hline
\begin{lstlisting}[mathescape=true]
void func1(int a,int b)
{
$1:$ int a,b=3,c=5,d;
$2:$ a=b+5;
$3:$ d=b*c;
$4:$ a=b+c;
$5:$ print a,d;
}
\end{lstlisting}   & 

\begin{tikzpicture}[scale=2]
\GraphInit[vstyle=Normal]
\SetUpEdge[style={->}]
  \SetGraphUnit{3}
  \Vertex{b}
  \Vertex[x=-1.0,y=-0.3]{a}
  \Vertex[x=0.4,y=-2.0]{c}
    \Vertex[x=0,y=-1.2]{d}
  \Vertex[x=-0.6,y=-2.0]{HU}
  \Vertex[x=0.8,y=-0.6]{PR}
  \Edge[label = $1$](PR)(c)
    \Edge[label = $1$](PR)(b)

  \Edge[label = $2$](b)(a)
   \Edge[label = $3$](b)(d)
          \Edge[label = $3$](c)(d)
   \Edge[label = $5$](d)(HU)
      \Edge[label = $5$](a)(HU)
  \Edge[label = $4$](c)(a)

   \tikzset{EdgeStyle/.style = { ->,bend left}}
   \Edge[label = $2$](PR)(a)
   \tikzset{EdgeStyle/.style = { ->,bend right}}
  \Edge[label = $4$](b)(a)

  \end{tikzpicture}   
\\  \hline
\end{tabular}

Lines 7-8 identifies the instruction that is unused. Node $v$ have incoming edges $k$ and $m$ and no outgoing edge between $k$ and $m$ which says that data computed in instruction $k$ is unused. Instruction $k$ can be removed. In example 6, variable $a$ is computed in instruction $2:[\{b,PR\} ,\{a\} ]$ and $4:[\{b,c\} ,\{a\} ]$ but for $I:[R,W],a \notin R\; for\; I>2\; and \;I<4$. Similar structure can be observed in the graph that variable $a$ have incoming edges with labels 2 and 4 but no outgoing edges in between 2 and 4. Instruction 2 can be removed.


   

 

The running time of this algorithm is $O(|N|^2)$ as the entire row and column of each variable $v$ has to be scanned.  
\subsection{Constant Propagation}
In a program, if a variable is assigned with a constant value, references to this variable in the program can be replaced directly with the constant value. This technique is called as constant propagation. Here, we will discuss how this technique is designed using our model.

\begin{algorithm}
\caption{Constant Propagation}
\begin{algorithmic}[1]
\Procedure{Constant Propagation}{}\newline   \label{euclid}
\textbf{Input:} Graph $G_p(N,E,L)$
 \For{every $v \in N.G$}
   \If{$(L((PR,v)) == k) \&\& (L((u^1,v)) == NULL) $}
      \For{every $L((v,u))=j$}
           \State{add edge $L((PR,u))=k$}
                      \State{delete edge $L((PR,v))=k \; and \; L((v,u))=j$}
 \EndFor
    \EndIf   
   \If{$(L((PR,v)) == k) \&\& (L((u^1,v)) \neq NULL) $}
      \For{every $L((v,u))=j \; where \; j>k \; and \; <m$}
           \State{add edge $L((PR,u))=k$}
                      \State{delete edge $L((PR,v))=k \; and \; L((v,u))=j$}
 \EndFor
    \EndIf    
 \EndFor
\EndProcedure
\end{algorithmic}
\end{algorithm}
Constant Propagation requires 1) identification of variable assigned with constant value. 2) Replacing the variable directly with the constant value where it has been referenced in the followed up instructions of the program. 
\par In a given instruction $I:[R,W]$, a constant value is assigned to a variable if $r=PR$ i.e. R is having only one variable PR, which represents constant value. In the graph, for a node $v$, if there is an incoming edge from node PR with label $k$ and no other incoming edges to node $v$ have the same label $k$, it says that in instruction $k$ a constant value is assigned to a variable. 
\par Lines 3-6 of algorithm 4, identifies a node $u$ where $L(PR,u)=k$ and no other incoming edges exists to $u$. For every outgoing edge of $u$ i.e. $L((u,v))=j$, where $j>k$, add edge $L((PR,v))=j$ and remove edges $L((PR,u))=k$ and $L((u,v))=j$ in $G_p$. In example 7, in instruction 1, $b$ is assigned with a constant value 3. Variable $b$ is read in instruction 2 which can be directly replaced with 3. 
 \par Lines 7-10 of algorithm, identifies a node $u$ where $L((PR,u))=k$ and other incoming edges $L((u^1,u))=m$ exists to $u$. For every outgoing edge of $u$ i.e. $L((u,v))=j$, where $j>k$ and $j<m$, add edge $L((PR,v))=j$ and remove edges $L((PR,u))=k$ and $L((u,v))=j$ in $G_p$. 
 
 \begin{tabular} { l c r }
 \hline 
 \multicolumn{2}{c}{Example 7} \\
 \hline            
 (a) & (b) \\
 \hline
\begin{lstlisting}[mathescape=true]
void func1(int a,int b)
{
$1:$ int a,b=3,c;
$2:$ a=b+5;
$3:$ c=a;
$4:$ print c;
}
\end{lstlisting}   & 
\begin{tikzpicture}[scale=2]
\GraphInit[vstyle=Normal]  
\SetUpEdge[style={->}]
  \SetGraphUnit{3}
  \Vertex{b}
  \Vertex[x=-1.0,y=-0.3]{a}
  \Vertex[x=0.4,y=-2.0]{c}
  \Vertex[x=-0.6,y=-2.0]{HU}
  \Vertex[x=0.8,y=-0.6]{PR}
    \Edge[label = $1$](PR)(b)
  \Edge[label = $2$](b)(a)
   \Edge[label = $3$](c)(a)
      \Edge[label = $4$](c)(HU)
   \tikzset{EdgeStyle/.style = { ->,bend left}}
   \Edge[label = $2$](PR)(a)
  \end{tikzpicture}   
\\  \hline
\end{tabular}

\subsection{Induction variable Detection}
An induction variable is a variable in loop whose value either increments or decrements constantly in every iteration. There are two types of induction variables: basic and derived. Basic induction variables are of the form $x=x+c$ where $c$ is constant value and $x$ is basic induction variable. Derived induction variable is a variable defined in the loop which is a linear function of basic induction variable.
Here, we will discuss how our model detects basic and derived induction variables.
 
\begin{tabular}{ l c r  }
\hline
\multicolumn{2}{c}{Example 8} \\
\hline
(a) & (b) \\
\hline
\begin{lstlisting} [mathescape]
void add()  
{
$1:$ s=0;
for($\overbrace{i=1}^{2}$;$\overbrace{i<2}^{3}$;$\overbrace{i++}^{4}$)
$5:$ s=s+i;;
$6:$ c=s*5;
}
\end{lstlisting}   &\begin{tikzpicture}[scale=1.8]
\GraphInit[vstyle=Normal]
\tikzset{ LabelStyle/.style = { rectangle, rounded corners, draw}}
\SetUpEdge[style={->}]
 \SetGraphUnit{3}
\Vertex {c}
\Vertex[x=-1.2,y=-0.9]{s}
\Vertex[x=-0.3,y=-1.9]{i}
 \Vertex[x=0.5,y=-0.9]{PR}
 \Vertex[x=0.5,y=-1.9]{HU}

\Edge[label = $6.1$](PR)(c)
\Edge[label = $1$](PR)(s)

\Edge[label = $2$](PR)(i)
\Edge[label = $5.1$](i)(s)
\Edge[label = $3$](PR)(HU)

\Edge[label = $6.1$](s)(c)
  \tikzset{EdgeStyle/.style = { ->,loop}}
 \Edge[label = $5.1$](s)(s)
   \tikzset{EdgeStyle/.style = { ->,loop below}}

  \Edge[label = $4$](i)(i)

   \tikzset{EdgeStyle/.style = { ->,bend right}}
\Edge[label = $4$](PR)(i)

\end{tikzpicture} 
\end{tabular}

If a node $v$ in graph $G_p$, consists of self loop then $v$ is a basic induction variable.\\
For a node $u$, if there is an incoming edge from $v$ i.e. the basic induction variable then $u$ is a derived induction variable. 
\section{Data Dependence in pointers}
A pointer is a variable that stores the address of another variable. Identifying data dependencies in a program that uses pointers is solved using alias analysis\cite{alias} and shape analysis \cite{shape}. Here, we will discuss how to represent and identify data dependencies in pointers using DDI model.

\begin{tabular} { l c r }
 \hline
 \multicolumn{2}{c}{Example 9} \\
 \hline            
 (a) & (b) \\
 \hline
\begin{lstlisting}[mathescape=true]
void poin1(int a,int b)
{
$1:$ int a=3,*p,c;
$2:$ p=&a;
$3:$ c=*p+a;
$4:$ print p,*p,c;
}
\end{lstlisting}   & 

\begin{tikzpicture}[scale=2]
\GraphInit[vstyle=Normal]
\SetUpEdge[style={->}]
  \SetGraphUnit{3}
  \Vertex{a}
  \Vertex[x=-1.0,y=-0.3]{PR}
  \Vertex[x=0,y=-1.0]{c}
      \Vertex[x=1.2,y=-0.2]{p}
      \Vertex[x=1.2,y=-2.0]{HU}

  \Edge[label = $1$](PR)(a)

  \Edge[label = $3$](a)(c)
     \tikzset{EdgeStyle/.style = { ->,bend left}}
   \Edge[label = $3$](a)(c)
          \Edge[label = $4$](c)(HU)
   \Edge[label = $4$](p)(HU)
      \tikzset{EdgeStyle/.style = { ->,bend right=75}}

      \Edge[label = $4$](a)(HU)
      \tikzset{EdgeStyle/.style = {dashed, ->}}

    \Edge[label = $2$](a)(p)
  \end{tikzpicture}   
\\  \hline
\end{tabular}

Pointers are variables that stores the address of another variable, they are represented as usually as how variables are represented in our model. The edge for the assignment statement, which assigns the address of a variable to pointer variable is represented using dashed line and added to graph. In example 8, pointer assignment statement $p=\&a$ is  represented as dashed directed edge in the graph with label 2. \\
Data dependence exists if there is a directed path of length $ \geq 2$ in graph $G_p$. The same theorem applies for pointers with the constraint  that the dashed edges should not be included in the path.

\section{Conclusion}
\par Thus, in summary, we have proposed a graph based Data Dependence Identifier (DDI) which will identify all types of data dependencies that include scalars, arrays and pointers. Further, compiler optimization techniques like dead code elimination, constant propagation, and induction variable detection can be performed using DDI.\\
\par Some of the salient features of our work are:
\begin{enumerate}
\item Before representing the program as graph, we parameterized  the program with components $(I,V \cup \{ HU,PR \} ,MAI)$. 
\item To represent the program as graph, we considered variables $V$ as set of nodes in the graph and the edges are drawn based on the way variables are accessed from memory.
\item Our Data Dependence Identifier(DDI) is a unique model that can identify all types of data dependencies in polynomial time. 
\item Compiler optimizations like constant propagation, induction variable detection, dead code elimination can be performed using DDI.
\end{enumerate}
\par By the application of graph theoretical concepts, our DDI model can be further investigated to break the given program into parallelized modules. To reduce the time complexity of DDI model, further investigation can be initiated to solve the data dependence identification problem using graph based machine learning algorithms. 

\bibliography{mybibfile}

\end{document}